\newcommand{\furl}[1]{\footnote{\url{#1}}}
\newcommand{\toolname}[1]{\capitalisewords{#1}}
\newcommand{\Haydi}{\toolname{haydi}}
\newcommand{\Atoms}{\mathcal{A}}
\newcommand{\Objs}{\mathcal{O}}
\newcommand{\Perms}{\mathcal{P}}
\newcommand{\Cn}{\mathcal{C}}
\newcommand{\uset}{\mathsf{uset}}
\newcommand{\duset}{\mathsf{uset_\downarrow}}
\newcommand{\getatoms}{\mathsf{atoms}}
\newcommand{\cf}{\mathsf{cf}}
\newcommand{\parent}{\mathsf{parent}}
\title{Haydi: Rapid Prototyping and Combinatorial Objects}
\titlerunning{Haydi}
\author{%
  Stanislav B\"{o}hm \and Jakub Ber\'{a}nek \and Martin \v{S}urkovsk\'{y}
  \thanks{Authors were supported by grants of GACR 15-13784S and SGS
    No. SP2017/82, V\v{S}B -- Technical University of Ostrava, Czech Republic.
    Computational resources for testing on a super computer were provided by
    IT4Innovations as project OPEN-8-26.}
}
\authorrunning{S. B\"{o}hm \and J. Ber\'{a}nek \and M. \v{S}urkovsk\'{y}}
\institute{%
  V\v{S}B -- Technical University of Ostrava,\\17. listopadu 2172/15, 708 00 Ostrava, Czech Republic\\
  \email{\{stanislav.bohm, jakub.beranek.st, martin.surkovsky\}@vsb.cz}
}
\begin{document}

\maketitle

\begin{abstract}
	\Haydi{} (http://haydi.readthedocs.io) is a framework for generating discrete structures. It provides a way to define a structure from basic building blocks and then enumerate all elements, all non-isomorphic elements, or generate random elements in the structure. \Haydi{} is designed as a tool for rapid prototyping. It is implemented as a pure Python package and supports execution in distributed environments. The goal of this paper is to give the overall picture of Haydi together with a formal definition for the case of generating canonical forms.
\end{abstract}

\begin{keywords}
	combinatorial objects, rapid prototyping, canonical representation
\end{keywords}

\section{Introduction}

The concept of rapid prototyping helps in verifying the feasibility of an initial idea and reject the bad ones fast. In mathematical world, there are tools like \toolname{Matlab}, \toolname{SageMath}, or \toolname{R} that allow one to build a working prototype and evaluate an idea quickly.
This paper is focused on the field of combinatorial objects and provides a prototyping tool
that allows to check claims on small instances by search over relevant objects. \Haydi{} (Haystack Diver) is an open-source Python package that provides an easy way of describing such structures by composing basic building blocks (e.g. Cartesian product, mappings) and then enumerating all elements, all non-isomorphic elements, or generating random elements.

The main design goal is to build a tool that is \emph{simple} to use, since building prototypes have to be cheap and fast.
There has been an attempt to build a \emph{flexible} tool that describes various structures and reduces limitations for the user.
The \emph{reasonable performance} of the solution is also important, but it has a lower priority than the first two goals.

To fulfill these goals, \Haydi{} has been built as a Python package.
Python is a well-known programming language and is commonly used as a prototyping language that provides a high degree of flexibility.
Since \Haydi{} is written purely in Python, it is compatible with PyPy\furl{https://pypy.org/} -- a fast Python implementation with JIT compiler.
Moreover, \Haydi{} is designed to transparently utilize a cluster of computers to provide a better performance without sacrificing simplicity or flexibility.
The distributed execution is built over \toolname{Dask/distributed}\furl{https://github.com/dask/distributed} and it was tested on Salomon cluster\furl{https://docs.it4i.cz/salomon/introduction/}. 

The goal of this paper is to give the overall picture of \Haydi{} together with
a formal definition for the case of generating canonical forms.
More detailed and programmer-oriented text can be found in the user guide\furl{http://haydi.readthedocs.io/}.
\Haydi{} is released as an open source project at \url{https://github.com/spirali/haydi} under MIT license.

The original motivation for the tool was to investigate hard instances for equivalence of deterministic push-down automata (DPDAs).
We have released a data set containing non-equivalent normed DPDAs~\cite{ndpda}.

The paper starts with two motivation examples in Section~\ref{s:examples} followed by covering related works in Section~\ref{s:relworks}.
Section~\ref{s:arch} introduces the architecture of \Haydi{}. Section~\ref{s:cnfs} shows a theoretical framework for generating canonical forms.
Section~\ref{s:distcomp} covers a basic usage of distributed computations and used optimizations. The last section shows performance measurements.

\section{Examples}\label{s:examples}

To give an impression of how \Haydi{} works, basic usage of \Haydi{} is
demonstrated on two examples.
The first one is a generator for directed graphs and the second one
is a generator of finite state automata for the reset word problem.

\subsection{Example: Directed graphs}\label{s:ex-dg}

In this example, our goal is to generate directed graphs with $n$ nodes.
Our first task is to describe the structure itself: we represent a graph as a set of edges,
where an edge is a pair of two (possibly the same) nodes. For the simplicity of outputs,
we are going to generate graphs on two nodes. However, this can be simply changed by editing a single constant,
namely the number 2 on the second line in following code:

\begin{minted}{python}
>>> import haydi as hd
>>> nodes = hd.USet(2, "n")  # A two-element set with elements {n0, n1}
>>> graphs = hd.Subsets(nodes * nodes)  # Subsets of a cartesian product
\end{minted}

The first line just imports \Haydi{} package. The second one creates a set of nodes, namely a set of two ``unlabeled'' elements.
The first argument is the number of elements, the second one is the prefix of each element name.
The exact meaning of \emph{USet} will be discussed further in the paper.
For now, it just creates a
set with elements without any additional quality, the elements of this set can
be freely relabeled. In this example, it provides us with the standard graph isomorphism.
The third line constructs a collection of all graphs on two nodes, 
in a mathematical notation it could be written as ``$\mathsf{graphs} = \mathcal{P}(\mathsf{nodes} \times \mathsf{nodes})$''.

With this definition, we can now iterate all graphs:

\begin{minted}{python}
>>> list(graphs.iterate())
[{}, {(n0, n0)}, {(n0, n0), (n0, n1)}, {(n0, n0), (n0, n1), (n1, n0)},
# ... 3 lines removed ...
n1)}, {(n1, n0)}, {(n1, n0), (n1, n1)}, {(n1, n1)}]
\end{minted}

or iterate in a way in which we can see only one graph per isomorphic class:

\begin{minted}{python}
>>> list(graphs.cnfs())  # cnfs = canonical forms
[{}, {(n0, n0)}, {(n0, n0), (n1, n1)}, {(n0, n0), (n0, n1)},
{(n0, n0), (n0, n1), (n1, n1)}, {(n0, n0), (n0, n1), (n1, n0)},
{(n0, n0), (n0, n1), (n1, n0), (n1, n1)}, {(n0, n0), (n1, n0)},
{(n0, n1)}, {(n0, n1), (n1, n0)}]
\end{minted}

or generate random instances (3 instances in this case):

\begin{minted}{python}
>>> list(graphs.generate(3))
[{(n1, n0)}, {(n1, n1), (n0, n0)}, {(n0, n1), (n1, n0)}]
\end{minted}

\Haydi{} supports standard operations such as \emph{map}, \emph{filter}, and \emph{reduce}.
The following example shows how to define graphs without loops, i.e. graphs such that for all edges $(a, b)$ hold that $a \neq b$:

\begin{minted}{python}
>>> no_loops = graphs.filter(lambda g: all(a!=b for (a,b) in g.to_set()))
\end{minted}

All these constructions can be transparently evaluated as a pipeline distributed across a cluster.
\Haydi{} uses \toolname{Dask/distributed} for distributing tasks, the following code assumes that \toolname{dask/distributed} server
runs at \texttt{hostname:1234}:

\begin{minted}{python}
# Initialization
>>> from haydi import DistributedContext
>>> context = DistributedContext("hostname", 1234)

# Run a pipeline
>>> graphs.iterate().run(ctx=context)
\end{minted}

\subsection{Example: Reset words}
\label{s:ex-cc}

A \emph{reset word} is a word that sends all states of a given finite automaton to a unique state.
The following example generates automata and computes the length of a minimal reset word.
It can be used for verifying the \v{C}ern\'{y} conjecture on bounded instances.
The conjecture states that the length of a minimal reset word is bounded by $(n-1)^2$ where $n$ is the number of states of the automaton~\cite{cerny1964,Volkov2008}.

First, we describe deterministic automata by their transition functions (a mapping from a pair of state and symbol to a new state).
In the following code, \verb|n_states| is the number of states and \verb|n_symbols| is the size of the alphabet.
We use \verb|USet| even for the alphabet, since we do not care about the meaning of particular symbols, we just need to distinguish them.

\begin{minted}{python}
# set of states q0, q1, ..., q_{n_states-1}
>>> states = hd.USet(n_states, "q")
# set of symbols a0, ..., a_{a_symbols-1}
>>> alphabet = hd.USet(n_symbols, "a")

# All mappings (states * alphabet) -> states
>>> delta = hd.Mappings(states * alphabet, states)
\end{minted}

Now we can create a pipeline that goes through all the automata of the given size (one per an isomorphic class) and finds the maximal length among minimal reset words:

\begin{minted}{python}
>>> pipeline = delta.cnfs().map(check_automaton).max(size=1)
>>> result = pipeline.run()

>>> print ("The maximal length of a minimal reset word for an "
...        "automaton with {} states and {} symbols is {}.".
...        format(n_states, n_symbols, result[0]))
\end{minted}

The function \verb|check_automaton| takes an automaton (as a transition function) and returns the length of the minimal reset word, or 0 when there is no such a word. It is just a simple breadth-first search on sets of states. The function is listed in Appendix~\ref{A:check-automaton}.

\section{Related works}\label{s:relworks}

Many complex software frameworks are designed for rapid checking mathematical ideas, for example \toolname{Maple}, \toolname{Matlab}, \toolname{SageMath}.
Most of them also contain a package for combinatorial structures, e.g. \toolname{Combinatorics} in \toolname{SageMath}\furl{http://doc.sagemath.org/html/en/reference/combinat/sage/combinat/tutorial.html},
\toolname{combstruct} in \toolname{Maple}\furl{https://www.maplesoft.com/support/help/Maple/view.aspx?path=combstruct}.

From the perspective of the mentioned tools, \Haydi{} is a small single-purpose package.
But as far as we know, there is no other tool that allows building structures by composition, searching only one structure of each isomorphism class as well as offering simple execution in distributed environment.

Tools focused on the generation of specific structures are on the other side of
the spectrum.
One example is \toolname{Nauty}~\cite{McKayGeng} that contains \toolname{Geng}
for generating graphs, another ones are generators for parity games in \toolname{PGSolver}~\cite{PGSolver} or automata generator for \toolname{SageMath}~\cite{ASM}. These tools provide highly optimized generators for a given structure.

\section{Architecture}\label{s:arch}

\Haydi{} is a Python package for rapid prototyping of generators for discrete structures. 
The main two components are \emph{domains} and \emph{pipelines}.
The former is dedicated to defining structures and the latter executes an operation over domains.
In this section, both domains and pipelines are introduced. Parts that are
related to generating canonical forms are omitted. This is covered separately in Section~\ref{s:cnfs}.

\subsection{Domains}

The basic structure in \Haydi{} is a \emph{domain} that represents an unordered collection of (Python) objects.
On abstract level, domains can be viewed as countable sets with some implementation details.
The basic operations with the domains are iterations through their elements and generating a random element. 
Domains are composable, i.e., more complex domains can be created from simpler ones.

There are six \emph{elementary} domains: Range (a range of integers), Values (a domain of explicitly listed Python objects), Boolean (a domain containing \verb|True| and \verb|False|), and NoneDomain (a domain containing only one element: \verb|None|). Examples are shown in Figure~\ref{f:elementary}.
There are also domains \verb|USet| and \verb|CnfValues|; their description is
postponed to Section~\ref{s:cnfs}, since it is necessary to develop a theory to explain their purpose.

\begin{figure}
\begin{minted}{python}
>>> import haydi as hd

>>> hd.Range(4)  # Domain of four integers
<Range size=4 {0, 1, 2, 3}>

>>> hd.Values(["Haystack", "diver"])
<Values size=2 {'Haystack', 'diver'}>
\end{minted}
\caption{Examples of elementary domains}
\label{f:elementary}
\end{figure}

New domains can be created by composing existing ones or applying a transformation. There are the following compositions: 
\emph{Cartesian product, sequences, subsets, mappings}, and \emph{join}. Examples are shown in Figure~\ref{f:domain-composition}, more details can be found in the user guide. There are two transformations \emph{map} and \emph{filter} with the standard meaning. Examples are shown 
in Figure~\ref{f:domain-trs}.

\begin{figure}
\begin{minted}{python}
>>> import haydi as hd
>>> a = hd.Range(2)
>>> b = hd.Values(("a", "b", "c"))

>>> hd.Product((a, b))  # Cartesian product
<Product size=6 {(0, 'a'), (0, 'b'), (0, 'c'), (1, 'a'), ...}>

>>> a * b  # Same as above
<Product size=6 {(0, 'a'), (0, 'b'), (0, 'c'), (1, 'a'), ...}>

>>> hd.Subsets(a)  # Subsets of 'a'
<Subsets size=4 {{}, {0}, {0, 1}, {1}}>

>>> hd.Mappings(a, a)  # Mappings from 'a' to 'a'
<Mappings size=4 {{0: 0; 1: 0}, {0: 0; 1: 1}, {0: 1; 1: 0}, ...}>

>>> hd.Sequences(a, 3)  # Sequences of length 3 over 'a'
<Sequences size=8 {(0, 0, 0), (0, 0, 1), (0, 1, 0), (0, 1, 1), ...}>

>>> hd.Join((a, b))  # Join 'a' and 'b', can be also written as 'a + b'
<Join size=5 {0, 1, 'a', 'b', 'c'}>
\end{minted}
\caption{Examples of domain compositions}
\label{f:domain-composition}
\end{figure}

\begin{figure}
\begin{minted}{python}
>>> a = hd.Range(5)

>>> a.map(lambda x: x * 10)
<MapTransformation size=5 {0, 10, 20, 30, 40}>
\end{minted}
\caption{Examples of domain compositions}
\label{f:domain-trs}
\end{figure}

\subsection{Pipeline}

Domains in the previous section describe a set of elements.
Pipelines provide a way how to work with elements in these sets.
Generally, a pipeline provides methods for generating and iterating elements and optionally 
applying simple ``map \& reduce'' transformations.  

The pipeline creates a stream of elements from a domain by one of the three methods.
We can apply transformations on elements in the stream. The pipeline ends by a reducing action.
The schema is shown in Figure~\ref{f:pipeline}. The pipeline consists of:

\begin{figure}
	\centering
	\includegraphics[width=\textwidth]{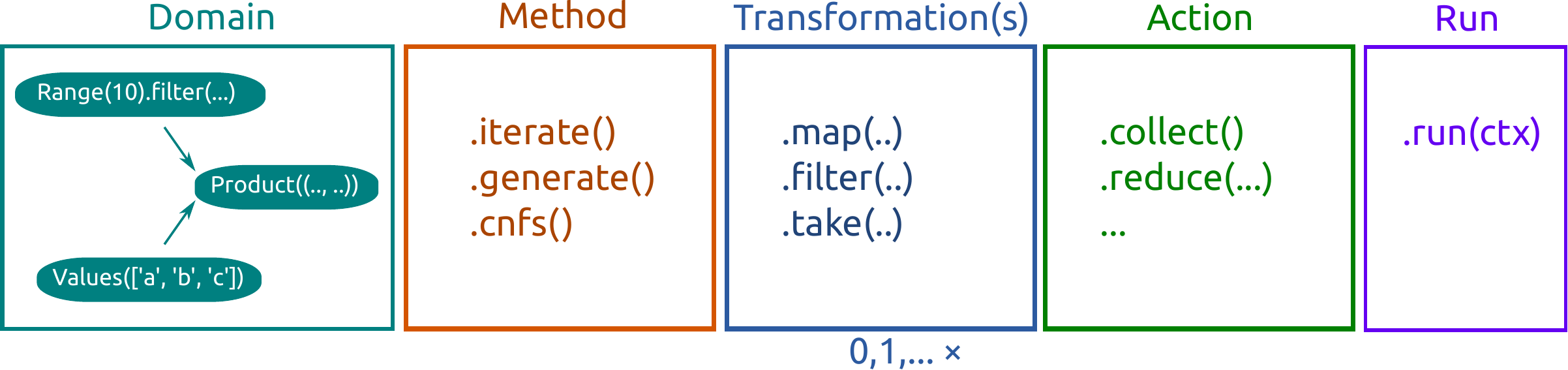}
	\caption{The pipeline schema}
	\label{f:pipeline}
\end{figure}

 
\textbf{Method}\quad It specifies how to take elements from the domain into the stream. Haydi provides three options: \verb|iterate()|, \verb|generate(n)|, and \verb|cnfs()|.
Method\\* \verb|iterarate()| iterates all elements of a given domain, \verb|generate(n)| creates $n$ random elements of the domain (by default with the uniform distribution over all elements),
and \verb|cnfs()| iterates over canonical forms (Section~\ref{s:cnfs}).

\textbf{Transformations}\quad Transformation modifies/filters elements in a stream.\\* There are three pipeline transformations:
\verb|map(fn)| -- applies the function \verb|fn| on each element that goes through the pipeline,
\verb|filter(fn)| -- filters elements in the pipeline according to the provided function,
\verb|take(count)| -- takes only first \verb|count| elements from the stream. 
The reason why transformations on domains and in pipeline are distinguished is described in \url{https://haydi.readthedocs.io/en/latest/pipeline.html#transformations}.

\textbf{Actions}\quad Action is a final operation on a stream of elements. 
For example there are: \verb|collect()| -- creates a list form of the stream,
\verb|reduce(fn)| -- applies binary operation on elements of the stream,
\verb|max()| -- takes maximal elements in the stream.

\textbf{run()}\quad The previous operations declare the pipeline, which is an immutable representation of a computational graph. The \verb|run()| method actually executes the pipeline. The optional \verb|ctx| (context) parameter specifies how should the computation be performed (serially or in a distributed way on a cluster).

The examples of pipelines are shown in Figure~\ref{f:pipelines}.
Not all parts of a pipeline have to be specified, if some of them are missing, defaults are used; the default method is \verb|iterate()|
and the default action is \verb|collect()|.

\begin{figure}
\begin{minted}{python}
>>> domain = hd.Range(5) * hd.Range(3)

# Iterate all elemenets and collect them
>>> domain.iterate().collect().run()
[(0, 0), (0, 1), (0, 2), (1, 0), (1, 1), (1, 2), (2, 0),
 (2, 1), (2, 2), (3, 0), (3, 1), (3, 2), (4, 0), (4, 1), (4, 2)]
 
# The same as above, since iterate() and collect() is default
>>> domain.run()
[(0, 0), (0, 1), (0, 2), (1, 0), (1, 1), (1, 2), (2, 0),
 (2, 1), (2, 2), (3, 0), (3, 1), (3, 2), (4, 0), (4, 1), (4, 2)]
 
# Generate three elements 
>>> domain.generate(3).run()
[(3, 2), (4, 0), (1, 2)]

# Take elements that are maximal in first component
>>> domain.max(lambda x: x[0]).run()
[(4, 0), (4, 1), (4, 2)]

\end{minted}
\caption{Examples of pipelines}
\label{f:pipelines}
\end{figure}

\section{Generating canonical forms}\label{s:cnfs}

In many cases, when we want to verify a property of a discrete structure, 
we are not interested in the names of the elements in the structure.
For example, in the case of graphs we usually want to see only one graph for each isomorphic class.
Another example can be finite-state automata; in many cases we are not especially interested in names of states
and actual symbols in the input alphabet. For example, in \v{C}ern\'{y} conjecture, the minimal length of reset words is not changed when the alphabet 
is permuted. On the other hand, symbols in some other problems may have special meanings and we cannot freely interchange them.

\Haydi{} introduces \verb|haydi.USet| as a simple but expressive mechanism for describing what permutations we are interested in.
It serves to define partitions of atomic objects. Each partition creates a
set of atomic objects that can be freely interchanged with one another; 
we call these partitions \emph{Unlabeled sets}.
They then establish semantics to what structure should be preserved when isomorphisms on various structures are defined.

\Haydi{} allows to iterate a domain in a way where we see only one element for
each isomorphic class. It is implemented as an iteration through \emph{canonical forms} (CNFS).

In this section, a simple theoretical framework is built. It gives a formal background to this feature.
Paragraphs starting with ``\emph{Abstract:}'' are meant as part of a theoretical description. 
Paragraphs starting with ``\emph{Haydi:}'' describe the implementation of the framework in \Haydi{}. 

\emph{Abstract:}
Let $\Atoms$ be a set of all atomic objects whose structure is not investigated any further.
The set of objects $\Objs$ is the minimal set with the following properties:

\begin{itemize}
\item $\Atoms \subseteq \Objs$ \quad (atoms)
\item If $o_i \in \Objs$ for $i \in \{1, 2, \dots, n \}$ then $\{o_1, o_2, \dots, o_n\} \in \Objs$ \quad (finite sets)
\item If $o_i \in \Objs$ for $i \in \{1, 2, \dots, n \}$ then $(o_1, o_2, \dots, o_n) \in \Objs$ \quad (finite sequences)
\end{itemize}

It is assumed that type of each object (atom, sequence, and set) can always be determined.
Therefore, it is assumed that sequences and sets are not contained in atoms. 

Function $\getatoms: \Objs \rightarrow 2^\Atoms$ that returns the atoms
contained in an object is defined as follows:

\begin{itemize}
	\item $\getatoms(o) = \{o\}$ if $o \in \Atoms$
	\item $\getatoms(o) = \bigcup_{i \in \{ 1, \dots, n \}} \getatoms(o_i)$ if 
	$o = \{o_1, \dots, o_n\}$ or $o = (o_1, \dots, o_n)$
\end{itemize}

\emph{\Haydi{}:}
The used Python instantiation of the definitions is the following: $\Atoms$ contains \verb|None|, \verb|True|, \verb|False|, all instances of types \verb|int| (integers) and \verb|str| (strings) and instances of the class \verb|haydi.Atom|.
Except for the last one, they are Python built-in objects; the last one is related to unlabeled sets and will be explained later.
Sequences in $\Objs$ are identified with Python tuples, sets with \verb|haydi.Set| (analogous to standard \verb|set|).
\Haydi{} also contains the type \verb|haydi.Map| (analogous to \verb|dict|) for
mappings. In the theoretical framework, mappings were not explicitly
distinguished, as they can be considered sets of pairs.
For sets and maps, standard python objects are not directly used for performance reasons\footnote{
Built-in classes \texttt{set} and \texttt{dict} are optimized for lookups;
however, Haydi needs fast comparison methods as will be seen later. \texttt{haydi.Set} and \texttt{haydi.Map} are stored in a sorted state to enable this.}; however, both \verb|haydi.Set| and \verb|haydi.Map| can be directly transformed into their standard Python counter-parts.

Note: Generally, domains in \Haydi{} may contain any Python object; however, domains that support iterating over CNFS 
impose some restrictions that will be shown later. Since the theoretical framework 
is built just for CNFS, its formalization to Python is mapped in a way that
respects these limitations from the beginning. For this reason, $\Objs$ is not
identified with all Python objects.
The Python incarnation of $\Objs$ is called \emph{basic objects}.

Now we established an isomorphism between objects. 
We define that two objects are isomorphic if they can be obtained
one from another by permuting its atoms.
To control permutations, partitioning of atoms is introduced and permutation of
atoms is allowed only within its ``own'' class.
These classes are defined through $\uset$, that is an abbreviation of ``unlabeled set''.


\emph{Abstract:}
Let us fix a function $\uset: \Atoms \rightarrow 2^{\Atoms}$ in the following way:
\begin{itemize}
\item $\forall a \in \Atoms: a \in \uset(a)$
\item $\forall a, b \in \Atoms: \uset(a) = \uset(b) \lor \uset(a) \cap \uset(b) = \emptyset$
\end{itemize}

Obviously $\uset$ partitions $\Atoms$ into disjoint classes.

Let $\Perms$ be a set of all bijective functions from $\Atoms$ to $\Atoms$ such that for each $\pi \in \Perms$ holds $\forall a \in \Atoms: \pi(a) \in \uset(a)$.

Applying $\pi \in \Perms$ to an object $o \in \Objs$ (written as $o^\pi$) is defined as follows:
\begin{itemize}
\item $o^\pi = \pi(o)$ if $o \in \Atoms$
\item $o^\pi = \{o_1^\pi, \dots, o_n^\pi\}$ if $\{o_1, \dots, o_n\} = o$
\item $o^\pi = (o_1^\pi, \dots, o_n^\pi)$ if $(o_1, \dots, o_n) = o$
\end{itemize}

Let $o_1, o_2 \in \Objs$ then $o_1$ and $o_2$ are \emph{isomorphic} (written as $o_1 \equiv o_2$) if there exists $\pi \in \Perms$ such that $o_1 = o_2^\pi$.

\emph{\Haydi{}:}
All integers, strings, \verb|None|, \verb|True|, and \verb|False| have a singleton unlabeled set, i.e., $\uset(a) = \{a\}$.
Therefore, all objects that contain only these atoms always form their own ``private'' isomorphic class. 
For example: \verb|("abc", 1)| cannot be isomorphic to anything else since string \verb|"abc"| and integer \verb|1| cannot be replaced,
because for each $\pi \in \Perms$ holds $\pi(\mathtt{"abc"}) = \mathtt{"abc"}$ and $\pi(\mathtt{1}) = \mathtt{1}$.

The only way to create a non-singleton unlabeled set is to use domain \verb|haydi.USet| (Unlabled set) that creates a set of atoms belonging to the same unlabeled set; in other words, if $X$ is created by \verb|haydi.USet| then for each $o \in X$ holds $\uset(o) = X$.

\begin{minted}{python}
>>> a = hd.USet(3, "a")
>>> list(a)
[a0, a1, a2]	
\end{minted}

The first argument is the size of the set, and the second one is the name of the set that has only informative character.
The name is also used as the prefix of element names, again without any semantical meaning.
Elements of \texttt{USet} are instances of \texttt{haydi.Atom} that is a wrapper over an integer and a reference to the \verb|USet| that contains them.

Method \verb|haydi.is_isomorphic| takes two objects and returns \verb|True| iff the objects are isomorphic according to our definition.
Several examples are shown in Figure~\ref{f:isom}.

\begin{figure}
\begin{minted}{python}
>>> a0, a1, a2 = hd.USet(3, "a")
>>> b0, b1 = hd.USet(2, "b")
>>> hd.is_isomorphic(a0, a2)
True
>>> hd.is_isomorphic(b1, b0)
True
>>> hd.is_isomorphic(a0, b0)
False
>>> hd.is_isomorphic((a0, b0), (a2, b1))
True
>>> hd.is_isomorphic((a0, a0), (a0, a2))
False
\end{minted}
\caption{Isomorphism examples}
\label{f:isom}
\end{figure}

\subsection{Canonical forms}

\Haydi{} implements an iteration over canonical forms as a way to obtain exactly one element for each isomorphic class.
We define a canonical form as the smallest element from the isomorphic class according to a fixed ordering. 

\emph{Abstract:} We fix a binary relation $\leq$ for the rest of the section 
such that $\Objs$ is well-ordered under $\leq$. As usual, we write $o_1 < o_2$ if $o_1 \leq o_2$ and $o_1 \neq o_2$.
A \emph{canonical form} of an object $o$ is $\cf(o) = \min \{ o' \in \Objs \mid o \equiv o' \}$.
We denote $\Cn = \{ o \in \Objs \mid o = \cf(o) \}$ as a set of all canonical forms.


\emph{\Haydi{}:} Canonical forms can be generated by calling \verb|cnfs()| on a domain.
Figure~\ref{f:cnfs} shows simple examples of generating CNFS. In case 1, we have only two 
results \verb|(a0, a0)| and \verb|(a0, a1)|; the former represents a pair with
the same two values
and the latter represents a pair of two different values. Obviously, we cannot get one from the other by applying any permutation,
and all other elements of Cartesian product \verb|a * a| can be obtained by permutations. This fact is independent of the size of \verb|a| (as it has at least two elements).
In case 2, the result is two elements, since we cannot permute elements from different usets. The third case shows 
canonical forms of a power set of \texttt{a}, as we see there is exactly one canonical form for each size of sets.

\begin{figure}
\begin{minted}{python}
>>> a = hd.USet(3, "a")
>>> b = hd.USet(2, "b")
	
>>> list((a * a).cnfs())  # 1
[(a0, a0), (a0, a1)]
	
>>> list((a + b).cnfs())  # 2
[a0, b0]
	
>>> list(hd.Subsets(a).cnfs())  # 3
[{}, {a0}, {a0, a1}, {a0, a1, a2}]
\end{minted}
\caption{CNFS examples}
\label{f:cnfs}
\end{figure}

Generating CNFS is limited in \Haydi{} to \emph{strict} domains that have the following features:

\begin{enumerate}
\item A strict domain contains only basic objects (defined at the beginning of this section).
\item A strict domain is closed under isomorphism.
\end{enumerate}

The first limitation comes from the need of ordering. The standard comparison method \verb|__eq__| is
not sufficient since it may change between executions. To ensure deterministic
canonical forms\footnote{
In Python 2, instances of different types are generally unequal, and they are ordered consistently but arbitrarily.
Switching to Python 3 does not help us, since comparing incompatible types throws an error (e.g. \texttt{3 < (1, 2)}), hence
standard comparison cannot serve as ordering that we need for basic objects.
},
\Haydi{} defines \texttt{haydi.compare} method. This  method is responsible for deterministic 
comparison of basic objects and provides some additional properties that are
explained later. The second condition ensures that canonical forms represent all
elements of a domain. Usually these conditions do not present a practical
limitation. Elementary domains except for \verb|haydi.Values| are always strict and standard compositions preserve strictness. Elementary domain \verb|haydi.CnfsValues| allows to define a (strict) domain through canonical elements,
hence it serves as a substitute for \verb|haydi.Values| in a case when a strict
domain from explicitly listed elements is needed.

\subsection{The Algorithm}

This section describes implementation of the algorithm that generates
canonical forms. A na\"{i}ve approach would be to iterate over all elements and filter out non-canonical ones.
\Haydi{} avoids the na\"{i}ve approach and makes the generation of canonical forms more efficient. It constructs new elements from smaller ones in a depth-first search manner. 
On each level, relevant extensions of the object are explored, and non-canonical
ones are pruned.
The used approach guarantees that all canonical forms are generated, and each will be generated exactly once, 
hence the already generated elements do not need to be remembered (except the current branch in a building tree).

This approach was already used in many applications and extracted into an abstract framework (e.g. ~\cite{McKay98}).
The main goal of this section is to show correctness of the approach used in
\Haydi{} and not to give an abstract framework for generating canonical elements, since it was done before. However, the goal is
not to generate a specific kind of structures, but provide a framework for their
describing, therefore, a rather abstract approach must still be used.

Let us note that the algorithm is not dealing here with efficiency of deciding whether a given element is in a canonical form. In our use cases, most elements are relatively
small, hence all relevant permutations are checked during checking the canonicity
of an element. Therefore, the implementation in \Haydi{} is quite straightforward.
It exploits some direct consequences of Proposition~\ref{p:no-gap} that allow
the algorithm to reduce the set of relevant permutations and in some cases immediately claim non-canonicity.

The used approach is based on the following two propositions.
The first says that an object cannot be canonical if it  contains ``gaps'' in
atoms occurring in the object. The second shows that new elements can only be
constructed from existing canonical forms and still all of them are reached.
 
At the beginning, let us introduce some properties of the ordering given by
\texttt{haydi.compare} which allow the propositions to be established.
On the abstract level, the following properties for ordering $\leq$ are assumed where $o, o' \in \Objs$:

\begin{itemize}
\item Tuples of the same length are lexicographically ordered.
\item If $o = \{ o_1, \dots, o_n \}$ where $o_1 < \dots < o_n$ and $o' = \{ o'_1, \dots, o'_n \}$ where $o'_1 < \dots < o'_n$ then
$o \leq o'$ if $(o_1, \dots, o_n) \leq (o'_1, \dots, o'_n)$.
\end{itemize}

A set $X \subseteq \Atoms$ \emph{contains a gap} if there exists $a \in X$ such that there is $a' \in \Atoms \setminus X$ and $a' \in \duset(a)$ where $\duset(a) = \{ a' \in \uset(a) \mid a' < a  \}$.

\begin{proposition}\label{p:no-gap}
If $o \in \Objs$ and $\getatoms(o)$ contains a gap, then $o$ is not a canonical form. 	
\end{proposition}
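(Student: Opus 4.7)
The plan is to exhibit, for any object $o$ whose atoms contain a gap, an isomorphic object strictly smaller than $o$ under $\leq$; this contradicts $o = \cf(o)$. Concretely, by definition of gap I pick $a \in \getatoms(o)$ together with $a' \in \uset(a) \setminus \getatoms(o)$ satisfying $a' < a$, and let $\pi \in \Perms$ be the transposition swapping $a$ and $a'$ and fixing every other atom. Since $\uset(a) = \uset(a')$, this $\pi$ respects the partition, hence $\pi \in \Perms$ and $o^\pi \equiv o$. It then suffices to show $o^\pi < o$.

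The reduction is to the following key lemma, which I would prove by structural induction on a basic object $x$: if $a' \notin \getatoms(x)$ and $a' < a$ with $\pi$ as above, then (i) $x^\pi \leq x$, and (ii) $x^\pi < x$ whenever $a \in \getatoms(x)$. The atomic case is immediate: if $x = a$ then $x^\pi = a' < a$; if $x$ is any other atom then $x^\pi = x$ and $a \notin \getatoms(x)$, so both parts hold (the second vacuously). The tuple case uses lexicographic ordering: if $x = (x_1, \dots, x_n)$ and $k$ is the smallest index with $a \in \getatoms(x_k)$, then $x_j^\pi = x_j$ for $j < k$ (neither $a$ nor $a'$ occurs in $x_j$) and $x_k^\pi < x_k$ by induction.

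The set case is the only step that is not a direct unfolding, and it is where I expect the main work. If $x = \{x_1, \dots, x_n\}$ with $x_1 < \dots < x_n$, then $\pi$ can reshuffle the sorted order, so I must compare $(x_1, \dots, x_n)$ with the sorted tuple $(y_1, \dots, y_n)$ of $\{x_1^\pi, \dots, x_n^\pi\}$ (the images remain distinct because $o \mapsto o^\pi$ is a bijection on $\Objs$, provable by a separate easy induction). The induction hypothesis gives $x_i^\pi \leq x_i$ for every $i$, so $\{x_1^\pi, \dots, x_i^\pi\}$ contains $i$ elements each at most $x_i$, forcing $y_i \leq x_i$ by an order-statistic argument; componentwise this yields $x^\pi \leq x$. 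For strictness, picking $k$ minimal with $a \in \getatoms(x_k)$, I get $x_j^\pi = x_j < x_k$ for $j < k$ while $x_k^\pi < x_k$, so $\{x_1^\pi, \dots, x_k^\pi\}$ consists of $k$ distinct elements strictly below $x_k$; hence $y_k < x_k$, and since $y_i \leq x_i$ throughout, $(y_1, \dots, y_n) < (x_1, \dots, x_n)$ lexicographically, i.e.\ $x^\pi < x$.

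The main obstacle is precisely this set subcase: one cannot compare sets componentwise after applying $\pi$, since the sorted order of the image may be scrambled. The order-statistic argument above is what bridges this, and it relies crucially on the second listed property of $\leq$ (sets ordered through their sorted tuples) together with the fact that $\pi$ acts as a bijection on $\Objs$.
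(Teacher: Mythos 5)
Your proposal is correct and follows essentially the same route as the paper: reduce to the auxiliary claim that the transposition $\pi$ swapping $a$ with the missing smaller atom $a'$ yields $o^\pi < o$, and prove that claim by structural induction with the atom and lexicographic-tuple cases handled exactly as in the paper. The only difference is that the paper dismisses the set case with ``similar ideas apply also for sets,'' whereas your order-statistic argument ($y_i \leq x_i$ for the sorted images, with strict inequality at the first index touching $a$) actually supplies the detail the paper omits, and it is sound.
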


Proposition~\ref{p:no-gap} is a direct consequence of the following claim:

\begin{proposition}
If $o \in O$ and $a', a \in \Atoms$ such that $a' \in \duset(a), a \in \getatoms(o), a' \notin \getatoms(o)$
and $\pi \in \Perms$ is a permutation that only swaps $a$ and $a'$
then $o^\pi < o$.
\end{proposition}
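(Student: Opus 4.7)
The plan is to prove a slightly strengthened claim by structural induction on $o$: for every $o \in \Objs$ with $a' \notin \getatoms(o)$, one has $o^\pi = o$ whenever $a \notin \getatoms(o)$ and $o^\pi < o$ whenever $a \in \getatoms(o)$. The proposition is the second half. The base case where $o$ is an atom is immediate: if $o = a$ then $o^\pi = a' < a$ because $a' \in \duset(a)$, and if $o \neq a$ (recall $o \neq a'$ by hypothesis on $o$) then $\pi$ fixes $o$.

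For the tuple case $o = (o_1, \dots, o_n)$, every component still satisfies $a' \notin \getatoms(o_i)$, so the induction hypothesis applies to each $o_i$. If no $o_i$ contains $a$ the components are individually fixed and $o^\pi = o$; otherwise let $i_0$ be the smallest index with $a \in \getatoms(o_{i_0})$, so the components before $i_0$ are unchanged and $o_{i_0}^\pi < o_{i_0}$. The assumed lexicographic ordering of equal-length tuples then gives $o^\pi < o$.

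The set case $o = \{o_1, \dots, o_n\}$, listed as $o_1 < \dots < o_n$, is the main obstacle, since sorting the image can break the original index alignment. Because $\pi$ is a bijection and the $o_i$ are distinct, the image remains a set of $n$ elements; write its sorted enumeration $s_1 < \dots < s_n$. The induction hypothesis gives $o_i^\pi \leq o_i$, with strict inequality exactly when $a \in \getatoms(o_i)$. A short counting step establishes $s_k \leq o_k$ for every $k$: the $k$ distinct image values $o_1^\pi, \dots, o_k^\pi$ all lie at or below $o_k$, so at least $k$ entries of the sorted image do. This already handles the case $a \notin \getatoms(o)$ (equality everywhere, hence $o^\pi = o$) and also any situation in which some $s_k < o_k$ at a position $k < k^*$, where $k^*$ denotes the smallest index with $a \in \getatoms(o_{k^*})$.

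It remains to treat the subcase where $a \in \getatoms(o)$ and $s_k = o_k$ for all $k < k^*$. Then the $k^*-1$ smallest elements of the sorted image are exactly $o_1, \dots, o_{k^*-1}$; these coincide with the induction-fixed values $\{o_i^\pi : i < k^*\}$, so the remaining sorted image values are precisely the elements of $\{o_i^\pi : i \geq k^*\}$. Consequently $s_{k^*} \leq o_{k^*}^\pi < o_{k^*}$ by the induction hypothesis applied to $o_{k^*}$, and lexicographic comparison once more yields $o^\pi < o$, completing the induction.
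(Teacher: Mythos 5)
Your proof is correct and follows the same structural induction as the paper's: identical base case, and the same ``first index containing $a$'' argument for tuples. The one substantive difference is that the paper dismisses the set case with ``similar ideas apply also for sets,'' whereas you actually carry it out, and your strengthened induction hypothesis ($o^\pi = o$ when $a \notin \getatoms(o)$, $o^\pi < o$ otherwise) together with the counting observation $s_k \leq o_k$ is exactly the right way to make that hand-wave precise --- the sorted image dominates position-wise, so either everything is fixed or the first discrepancy is a strict decrease, and your analysis at $k^*$ shows a discrepancy must occur when $a$ is present. This buys a complete argument where the paper leaves a gap, at the cost of a slightly heavier induction statement; no errors to report.
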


\begin{proof}
The proposition is proved by induction on the structure of $o$; let $a, a', \pi$ be as in the statement of the proposition:
If $o$ is an atom then directly $o = a, o^\pi = a'$ and $a' < a$ from assumptions.
Now assume that $o = (o_1, \dots, o_n)$ and the proposition holds for all $o_i, i \in \{1, \dots, n\}$.
From assumptions we get that each $o_i$ does not contain $a'$ and there is the
minimal index $f \in \{1, \dots, n\}$ such that $o_f$ contains $a$. Hence $o_i =
o_i^\pi$ for all $i \in \{1, \dots, f-1\}$ and $o_f^\pi < o$ by the induction
assumption. Since tuples are lexicographically ordered it follows that $o^\pi < o$. Similar ideas apply also for sets.
\qed
\end{proof}

Let us define function $\parent : \Objs \rightarrow \Objs \cup \{ \bot \}$ (where $\bot$ is a fresh symbol) that gives rise to a search tree. The function returns
a ``smaller'' object from which the object may be constructed. The function returns $\bot$ for ``ground'' objects (atoms, empty tuples/sets). 

\[   
\parent(o) = 
\begin{cases}
\bot & \text{if } o \in \Atoms \cup \{ (), \{\} \} \\
(o_1, \dots, o_n) & \text{if } o = (o_1, \dots, o_{n+1}) \\
o \setminus \{ x \} & \text{if } o \text{ is a non-empty set and } x = \max o \\  
\end{cases}
\]



\begin{proposition}
	\label{p:parent}
	For each $o \in \Objs$ holds:
	\begin{enumerate}
	\item Exists $n \in \{1, 2, \dots \}$ such that $\parent^n(o) = \bot$.
	\item If $o$ is a canonical form then $\parent(o)$ is $\bot$ or a canonical form. 
	\end{enumerate}
\end{proposition}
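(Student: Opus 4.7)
The plan is to handle both parts by structural induction on $o$, following the case split in the definition of $\parent$.

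For part (1), I would observe that on the ground cases ($o$ an atom, or $o \in \{(),\{\}\}$) we have $\parent(o) = \bot$ and $n=1$ works. For $o = (o_1,\dots,o_{n+1})$ each application of $\parent$ strips one component, so after $n+2$ iterations we reach $\bot$. For a non-empty set each application removes the current maximum, so after $|o|+1$ iterations we reach $\bot$. Thus a simple structural induction on the top-level arity of $o$ suffices.

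For part (2) I would proceed by contrapositive: assuming $\parent(o) \neq \bot$ is not canonical, I exhibit $\pi \in \Perms$ with $o^\pi < o$, contradicting canonicity of $o$. By hypothesis there is some $\pi$ with $\parent(o)^\pi < \parent(o)$; the key move is to show that the \emph{same} $\pi$ satisfies $o^\pi < o$. In the tuple case $o = (o_1,\dots,o_{n+1})$ this is immediate from lexicographic ordering: if $i \le n$ is the first index where $o_i^\pi < o_i$ with the earlier components agreeing, then the same disagreement appears at position $i$ in the comparison of $o^\pi$ with $o$.

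The main obstacle is the set case, where $\parent(o) = o \setminus \{o_n\}$ (with $o_n = \max o$) and the freshly restored element $o_n^\pi$ can land anywhere in the sorted listing of $o^\pi$. Writing the sorted form of $\parent(o)^\pi$ as $p_1 < \dots < p_{n-1}$, let $i$ be the first index with $p_i < o_i$ (so $p_j = o_j$ for $j < i$). Since $\pi$ is a bijection and set elements are distinct, $o_n^\pi \notin \{p_1,\dots,p_{n-1}\}$, and in particular $o_n^\pi \notin \{o_1,\dots,o_{i-1}\}$. I would then split on the position of $o_n^\pi$: if $o_n^\pi > o_{i-1}$, the first $i-1$ elements of sorted $o^\pi$ remain $o_1,\dots,o_{i-1}$ and the $i$-th element is at most $p_i < o_i$; if $o_n^\pi \le o_{i-1}$, then $o_n^\pi$ is inserted at some position $m < i$ and strictly undercuts $o_m$ there, since $o_n^\pi \neq o_j$ for $j<i$. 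Either way the sorted form of $o^\pi$ is lexicographically below that of $o$, so $o^\pi < o$, contradicting canonicity and completing the proof.
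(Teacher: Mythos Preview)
Your argument is correct and follows essentially the same route as the paper's proof: both parts are handled by the same structural case split, part~(2) is argued by contrapositive using the \emph{same} permutation $\pi$ that witnesses non-canonicity of $\parent(o)$, the tuple case is disposed of via lexicographic order, and the set case is a two-way split on where the reinserted element $o_n^\pi$ lands in the sorted listing. The only cosmetic difference is that the paper indexes the split by the position $k$ of $o_{n+1}^\pi$ in the sorted $o^\pi$ versus the first-difference index $f$, whereas you compare the value $o_n^\pi$ against $o_{i-1}$; you may want to note explicitly that your case split still goes through when $i=1$ (interpreting the condition as vacuous), but otherwise the proofs coincide.
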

\begin{proof}
	(1) If $o \in \Atoms$ then $n = 1$, if $o$ is a set/tuple then $n$ is the number of elements in the set/tuple.
	
	(2) Assume that there is $o \in \Cn$ and $o' = \parent(o) \neq \bot$ and there
  is $\pi \in \Perms$ such that $o'^\pi < o'$. Since $o' \neq \bot$, $o$ has to
  be a non-empty tuple or set by definition of $\parent$. If $o$ is a tuple then
  from the lexicographic ordering of tuples follows that $o^\pi < o$ and this is
  a contradiction. Now we explore the case $o = \{ o_1, \dots o_{n+1} \}$ where $o_i < o_j$ for $i
  < j$ and $i, j \in \{1, \dots, n+1 \}$. If there is $\{p_1, \dots p_{n}\} = o'^{\pi}$ such that $p_i < p_j$ for $i,j \in \{1, \dots, n\}$ then from fact that $o'^\pi < o'$ follows that there has to be $f \in \{1, \dots, n\}$ such that $p_f < o_f$ and $p_i = o_i < p_f$ for all $i \in \{1, \dots, f - 1\}$.
	The last step is to explore what happens when $\pi$ is applied on $o$; let $\{q_1, \dots q_{n+1}\} = o^{\pi}$ such that $q_i < q_j$ for $i,j \in \{1, \dots, n + 1\}$ and let $k \in \{1, \dots, n + 1\}$ such that $q_k = o_{n+1}^\pi$.
	Since applying $\pi$ on an object is bijective, $k \neq f$. 
	If $f < k$ then it follows that
	$o_i = p_i = q_i$ for $i \in \{1, \dots, f - 1 \}$ and $q_f = p_f < o_f$ and hence $o^\pi < o$.
	If $k < f$ then
	$o_i = p_i = q_i$ for $i \in \{1, \dots, k - 1 \}$ and $q_k < q_{k+1} = p_k = o_k$ and hence $o^\pi < o$.	
	\qed
\end{proof}

Proposition \ref{p:parent}.1 shows that $\parent$ defines a tree where:
$\bot$ is the root; non-root nodes are elements from $\Objs$;
\ref{p:parent}.2 shows that each canonical form can be reached from the root by a path that contains only canonical forms.
Moreover, the elements ``grow'' with the distance from the root.

This serves as a basis for the algorithm generating canonical forms
of elements from a domain. It recursively takes an object and tries to create a
bigger one, starting from $\bot$. On each level, it checks whether the new
element is canonical, if not, the entire branch is terminated. The way of getting a bigger object from a smaller one,     
depends on the specific domain, what type of objects are generated and by which
elements the already found elements are extended.
Since domains are composed from smaller ones, \Haydi{} iterates the elements of a subdomain to gain possible ``extensions'' to create a new object; such extensions are then added to the existing object to obtain a possible continuation in the tree.
Since subdomains are also strict domains, only through canonical elements of the
subdomain is iterated and 
new extensions are created by applying permutations on the canonical forms. 
Once it is clear that the extension leads to an object with a gap, then such a
permutation is omitted. Therefore, it is not necessary to go through all of the permutations.

Example:

\begin{minted}{python}
>>> a = hd.USet(1000, "a")
>>> b = a * a
<Product size=1000000 {(a0, a0), (a0, a1), (a0, a2), (a0, a3), ...}>
>>> list(b.cnfs())
[(a0, a0), (a0, a1)]
\end{minted}

The domain in variable \texttt{b} has one million of elements; however, only two 
of them are canonical forms. \Haydi{} starts with an empty tuple, then it asks for canonical forms of the subdomain \verb|a| that is a set containing only \texttt{a0}. The only permutation on \texttt{a0} that does not create a gap after adding into empty tuple is identity,
so the only relevant extension is \texttt{a0}. Therefore, only \verb|(a0,)| is
examined as a continuation. It is a canonical form, so the generation continues. Now the second domain from Cartesian product is used, 
in this particular example, again canonical forms of \verb|a| is used. At this
point the only no-gap (partial) permutations are identity and swap of
\texttt{a0} and \texttt{a1}, hence possible extensions are \verb|a0| and
\verb|a1|. Extending \texttt{(a0,)} give us \texttt{(a0, a0)} and \texttt{(a0,
  a1)} as results.

The approach is similar when sets are generated. The only thing that needs to be
added for this case is a check that the extending object is bigger (w.r.t. $\leq$) than previous
ones, to ensure that the current object is the actual parent of the resulting object.

\section{Distributed computations}\label{s:distcomp}

\Haydi{} was designed to enable parallel computation on cluster machines from the beginning.
\toolname{Dask/distributed}\furl{https://github.com/dask/distributed} serves as the backend for computations.
The code that uses this feature was already shown at the end of Section~\ref{s:ex-dg}.

\Haydi{} contains a scheduler that dynamically interacts with \toolname{dask/distributed} scheduler.
\Haydi{}'s scheduler gradually takes elements from a pipeline and assigns them to \toolname{dask/distributed}.
\Haydi{} calculates an average execution time of recent jobs and the job size is altered to having neither too small jobs
nor too large with respect to job time constraints.

\Haydi{} chooses a strategy to create jobs in dependence on a chosen method of a domain exploration. 
The simplest strategy is for randomly generated elements; the stream in the pipeline induces independent jobs and the scheduler has to care only about collecting results and adhering to a time constraint (that may be specified in \verb|run| method). 

In the case of iterating over all elements, there are three supported strategies: strategy for domains that support \emph{full slicing}, for domains with \emph{filtered slicing}, and a generic strategy for domains without slicing. 
The last one is a fallback strategy where \Haydi{} scheduler itself generates elements, these elements with the rest of the pipeline
are sent as jobs into \toolname{dask/distributed}.

The full slicing is supported if the number of elements in a domain is known and an iterator over the domain that skips the first $n$ elements can be
efficiently created.
In this case, the domain may be sliced into disjunct chunks of arbitrary sizes. \Haydi{} scheduler simply creates 
lightweight disjoint tasks to workers in form ``create iterator at $i$ steps and process $m$ elements'' without transferring 
explicit elements of domains. All built-in domains support full slicing as long as the filter is not applied.

If a domain is created by applying a filter, both properties are lost generally,
i.e., the exact number of elements, 
and an efficient iterator from the $n$-th item. However, if the original domain
supports slicing it is possible to utilize this fact. 
Domains can be sliced as if there was no filter present in domain or subdomains
at all, 
while allowing to signalize that some elements were skipped. Note, the filtered
elements cannot be silently swallowed, because the knowledge of how many elements
were already generated in the underlying domain would be lost. In such a case it
could not be possible to ensure that the iterations go over disjunct chunks of a domain.
The iterators that allow to signalize that one or more elements were internally
skipped are called ``skip iterators''.
The ability to signalize skipping more elements at once allows to implement efficient slicing when
filter domains are used in composition. For example, assume a Cartesian product of two filtered domains, where consecutive chunks of elements are dropped when a single element is filtered in a subdomain. This strategy usually works well in practice for domains
when elements dropped by the filter are spread across the whole domain. 

If canonical forms are generated, then the goal is to build a search tree. One job assigned to \toolname{dask/distributed} 
represents a computation of all direct descendants of a node in a search tree.
In the current version, it is quite a simple way of distributed tree search and there is a space for improvements; it is the youngest part of \Haydi{}.

\section{Performance}\label{s:perf}

The purpose of this section is to give a basic impression of \Haydi{}'s
performance. For comparison the two examples from
Sections~\ref{s:ex-dg} ad \ref{s:ex-cc} are used.

First of all, Haydi is compared with other tools. This comparison is
demonstrated on the example of generating directed graphs.
\toolname{Geng}~\cite{McKayGeng} is used as a baseline within the comparison,
since it is a state of the art generator for graphs. In order to simulate a
prototyping scenario a special version is included. This version loads graphs
generated by \toolname{geng} into Python. The loading process is done using the \toolname{networkx}\furl{https://networkx.github.io/} library and a small manual wrapper.
Moreover, the results of two other experiments are included. Both experiments summarize
graphs generated by \toolname{SageMath}, in the first case \toolname{SageMath}
uses \toolname{geng} as the backend while in the other one it uses its own graph
implementation \toolname{Cgraph}.
\Haydi{} was executed with Python 2.7.9 and PyPy 5.8.0. \toolname{Geng} 2.6r7
and \toolname{SageMath} 8.0 were used. Experiments were executed on a laptop
with Intel Core i7–7700HQ (2,8 GHz). Source codes of all test scripts can be
found in the \Haydi{}'s git repository. The results are shown in Table~\ref{tab:geng}.
In all cases, except the last one, the goal was to generate all non-isomorphic graphs  with the given number of vertices without any additional computation on them. The last entry generates all possible graphs (including isomorphic ones) and runs in parallel on 8 processes. 

\begin{table}
\caption{Performance of generating graphs}\label{tab:geng}
\centering
\begin{tabular}{ |c|c|c|c|c| }
	\hline
	\bf{Tool/\# of vertices}					& \bf{5}    	& \bf{6}    & \bf{7}    & \bf{8} \\
	\hline
	\toolname{geng} (without loading to Python) 						& \textless0.01s & \textless0.01s & \textless0.01s  & 0.01s		\\
	\hline
	\toolname{geng} + manual parser 		& \textless0.01s & 0.03s   & 0.05s    & 0.11s		\\
	\hline
	\toolname{geng} + \toolname{networkx} 			& 0.27s   & 0.28s   & 0.28s    & 0.94s		\\
	\hline
	\toolname{SageMath} (\toolname{geng} backend) 	& 0.17s 	 & 0.18s   & 0.26s    & 2.17s    	\\
	\hline
	\toolname{SageMath} (\toolname{Cgraph} backend) 	& 0.09s 	 & 0.55s   & 6.59s    & 139.76s  	\\
	\hline
	\Haydi{} canonicals (Python)	& 0.53s 	 & 11.55s  & timeout & timeout 	\\
	\hline
	\Haydi{} canonicals (PyPy)		& 0.34s	 & 5.46   & timeout & timeout	\\
	\hline
	\Haydi{} parallel \texttt{iterate()} (PyPy)	& 3.27s  & 3.49s   & 70.99s   & timeout	\\
	\hline
\end{tabular}\\
timeout is 200s
\end{table}

It is obvious that Haydi cannot compete with Geng in generating graphs. Geng is hand-tuned for this specific use case,
in contrast to Haydi that is a generic tool. On the other hand, the Haydi program that generates graphs can be simply extended or modified to generate different custom structures while modifying Geng would be more complicated.



The second benchmark shows strong scaling of parallel execution of the reset word generator from Section~\ref{s:ex-cc} for six vertices and two alphabet characters for variants where \verb|cnfs()| was replaced by \verb|iterate()|, since the parallelization 
of \verb|cnfs()| is not fully optimized, yet. The iterated domain supports the full slicing mode.
The experiment was executed on the Salomon cluster.

\begin{table}
\small
\caption{Performance of \texttt{iterate()} on Salomon}\label{tab:haydi-dist}
\centering
\begin{tabular}{ |c|c|c| }
	 \hline
	 \bf{Nodes (24 CPUs/node)} & \bf{Time} & \bf{Strong scaling} \\
	 \hline
	 1 & 3424s & 1 \\
	 \hline
	 2 & 1908s & 0.897 \\
	 \hline
	 4 & 974s & 0.879 \\
	 \hline
	 8 & 499s & 0.858 \\
	 \hline
\end{tabular}
\end{table}

The last note on performance: we have experimented with several concepts of 
the tool. The first version was a tool named \toolname{Qit}\furl{https://github.com/spirali/qit},
that shares similar ideas in API design with \Haydi{}. It also has Python API, but generates 
C++ code behind the scene that is compiled and executed.
Benchmarks on prototypes showed it was around $3.5$ times faster than pure Python version (executed in PyPy);
however, due to C++ layer, \toolname{Qit} was less flexible than the current version \Haydi{} and hard to debug for the end user.
Therefore, this version was abandoned in favor of the pure-python version to obtain a more flexible environment for experiments and prototyping. As the problems encountered in generation of combinatorial objects are often exponential, $3.5\times$ speedup does not compensate inflexibility.

\bibliographystyle{splncs03}
\bibliography{references}

\clearpage
\appendix

\section{Function \texttt{check\_automaton}}
\label{A:check-automaton}

\begin{minted}{python}
from haydi.algorithms import search

# Let us precompute some values that will be repeatedly used
init_state = frozenset(states)
max_steps = (n_states**3 - n_states) / 6
# Known result is that we do not need more than (n^3 - n) / 6 steps

def check_automaton(delta):
	# This function takes automaton as a transition function and
	# returns the minimal length of reset word or 0 if there
	# is no such word

	def step(state, depth):
		# A step in bread-first search; gives a set of states
		# and return a set reachable by one step
		for a in alphabet:
		    yield frozenset(delta[(s, a)] for s in state)

	delta = delta.to_dict()
	return search.bfs(
		init_state,  # Initial state
		step,        # Function that takes a node and
                             # returns the followers
		lambda state, depth: depth if len(state) == 1 else None,
		# Run until we reach a single state
		max_depth=max_steps,  # Limit depth of search
		not_found_value=0)    # Return 0 when we exceed
                                      # depth limit
\end{minted}

\end{document}